\tikzset{font=\footnotesize}
    \pgfplotsset{
        compat=1.14,
        grid style={dotted},
        grid=major,
        table/search path={.},
        typeset ticklabels with strut,
    }
\newcommand{\In}{\ifmmode\ \textrm{\textbf{in}}\ \else\textbf{in}\ \fi}
\newcommand{\Or}{\ifmmode\ \textrm{\textbf{or}}\ \else\textbf{or}\ \fi}
\newcommand{\And}{\ifmmode\ \textrm{\textbf{and}}\ \else\textbf{and}\ \fi}
\newcommand{\Not}{\ifmmode\ \textrm{\textbf{not}}\ \else\textbf{not}\ \fi}
\newcommand{\Continue}{\ifmmode\ \textrm{\textbf{continue}}\ \else\textbf{continue}\ \fi}
\newcommand{\Alert}{\ifmmode\ \textrm{\textbf{alert}}\ \else\textbf{alert}\ \fi}
\renewcommand{\gets}{ $ \leftarrow $ }
\newcounter{msgno}
\tikzset{>=latex}
\tikzset{
    flowmsg/.style={
        -latex,
    },
}
\date{}
\title{\Large \bf Software Distribution Transparency and Auditability}
\author{
    {\rm Benjamin Hof \qquad Georg Carle} \\
    Technical University of Munich
} 
\begin{document}
\maketitle


\subsection*{Abstract}

%

A large user base relies on software updates provided through package
managers.
This provides a unique lever for improving the security of the software update
process.
We propose a transparency system for software updates and implement it for a
widely deployed Linux package manager, namely APT.
Our system is capable of detecting targeted backdoors without producing
overhead for maintainers.
In addition, in our system, the availability of source code is
ensured,
the binding between source and binary code is verified using reproducible
builds, and
the maintainer responsible for distributing a specific package can be
identified.
We describe a novel “hidden version” attack against current software
transparency systems and propose as well as integrate a suitable defense.
%
To address equivocation attacks by the transparency log server, we introduce
tree root cross logging, where the log's Merkle tree root is submitted into a
separately operated log server.
This significantly relaxes the inter-operator cooperation requirements compared
to other systems.
Our implementation is evaluated by replaying over 3000 updates of the Debian
operating system over the course of two years, demonstrating its viability and
identifying numerous irregularities.

\section{Introduction}


Software systems require regular updates.
The protection of software distribution from manipulation is therefore an
integral part of computer security~\cite{
    acar_sok:_2016,
    barrera_baton:_2014,
    barrera_understanding_2012,
    cappos_look_2008,
    fahl_hey_2014,
    kuppusamy_diplomat:_2016,
    karthik_uptane:_2016,
    nikitin_chainiac:_2017,
    samuel_survivable_2010,
}.
The attractiveness of using software updates to distribute malware is
evidenced by several recent attacks piggybacking on legitimate software to
target large enterprises with backdoors~\cite{goodin_ccleaner_2017,
greenberg_software_2017, kaspersky_lab_shadowpad:_2017}.

Many systems rely on package-based software updaters to provide updates, such
as installations of Linux distributions.
These distributions offer a central and collectivized point of security
updates.
Many organizations, unable to provide their own security support for software,
depend on the distributions for important updates.
Due to their central position, distributions consequently provide an important
lever for improving the security of software dissemination.
This distribution process poses a number of challenges.

Package managers generally use cryptographic signatures to protect the
distribution of software packages, if they provide any security at
all~\cite{cappos_look_2008}.
Systems based on signatures are vulnerable to \emph{targeted backdoors}, where
an attacker is able to correctly sign a manipulated version of the code.
The manipulated version is only offered selectively to the
victim~\cite{fahl_hey_2014}.

To prepare software for redistribution, modifications and additions are often
required as integration glue.
The \emph{attribution} of these changes is important, because authorizing
a software for distribution constitutes a statement of confidence into its
benevolence.

Providing assurance of the \emph{mapping between source code and binary code}
is being addressed in the Reproducible Builds
efforts~\cite{levsen_overview_2017}.
Building on this property, a secure software distribution can assure that for
any binary the corresponding \emph{source code is guaranteed to be available}
to assist audit and forensic activities.

In this paper, we develop and evaluate a system to address these challenges
for package managers.
We propose that the software package meta data and source code are submitted
into an untrusted append-only Merkle tree log.
The design of the log allows third parties to efficiently verify its honest
operation.
In detail, our contributions are: 

\begin{itemize}
    \item Protection against targeted backdoors, including a novel ``hidden
        version'' attack against existing software transparency systems
        (Section~\ref{sse:design})
    \item Auditability, providing the inspectable source code corresponding to
        any installed binary and identifying the authorizing maintainer
                (Section~\ref{sse:monitor})
    \item Practical protection against equivocation, reducing
        inter-operator requirements (Section~\ref{sse:equivocation})
    \item Evaluation on over 3000 real Debian distribution updates,
        demonstrating the viability of our system and discovering a
        substantial number of bugs (Section~\ref{sse:evaluation})
    \item Comparison to related work, highlighting the practicability of our
        solution (Section~\ref{sse:comparison})
\end{itemize}


\noindent
We discuss Background in Section~\ref{sse:background} and outline Related
Work in Section~\ref{sse:related}.

\section{Background}
\label{sse:background}

In order to effectively secure the distribution process of software it is
necessary to understand how code is commonly redistributed.
In this section we therefore describe the architecture of the APT package manager.

\subsection{Software distribution models}

For many software projects, a dissemination model along the following lines
applies.
Programmers, referred to as “upstream” in our context, provide
their software for reuse.
They upload the code to code hosting platforms, such as Github, or
programming language specific package managers.
From there, software is downloaded by distribution maintainers for packaging
and integration.
This dissemination model is shown in Figure~\ref{fig:xform}.
Steps marked with boxes regularly include transformations or modifications,
such as packaging or adding meta data.
The parts relevant to the distribution package manager are shown below the
dashed line.

Linux distributions, as an outcome of non-commercial or commercial collaboration, 
have an important role in the distribution of software.
The core task of a distribution is the integration work of making up to tens
of thousands of individual software projects co-installable and standardizing
interfaces for administration.
After integrating a package, distributions regularly provide security support
for it.
Other tasks of the distribution include license vetting and quality assurance.

Distributions are therefore important parts of software dissemination.
Viewing the process of software distribution as a series of transformations, we
can observe that it would ideally be possible for the user to determine the
provenance of each piece of code in reverse direction of the arrows in
Figure~\ref{fig:xform}.
Each of the transformations should be inspectable, by offering a machine
readable description allowing to reproduce the transformation.

\subsection{Debian}

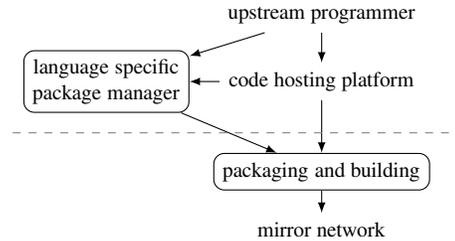
\begin{figure}[t]
    \centering
    \begin{tikzpicture}[
            action/.style={
                draw,
                rounded corners,
            },
            send/.style={
                ->,
            },
        ]

    \node[name=upstream] { upstream programmer };
    \node[name=hoster, below=4mm of upstream] { code hosting platform };
    \node[name=community, left=4mm of hoster, align=center, action]
        { language specific \\ package manager };
    \node[name=packaging, below=7mm of hoster, action]
        { packaging and building };
    \node[name=distribution, below=3mm of packaging] { mirror network };

    \node[name=l, below left=2mm of community] {};
    \node[name=r, right=60mm of l] {};

    \draw[send] (upstream) to (hoster);
    \draw[send] (upstream) to (community);
    \draw[send] (hoster) to (packaging);
    \draw[send] (hoster) to (community);
    \draw[send] (community) to (packaging);
    \draw[send] (packaging) to (distribution);

    \draw[dashed, gray] (l) to (r);

    \end{tikzpicture}
    \caption{Transformations of a software project.}
    \label{fig:xform}
\end{figure}

Debian is one of the oldest Linux distributions, and constitutes the basis for
many derivative distributions~\cite{distrowatch}.
In the following, we provide a simplified overview on how packages are
distributed with the APT package manager in Debian.

On a Debian machine, additional software can be installed by downloading and
installing packages containing the libraries, executables and additional
artifacts using the Advanced Package Tool (APT) package manager.

These packages are created by the maintainers of the Debian distribution, who
upload the signed packages to a central server, the archive.
This server coordinates the building of binary packages for the supported CPU
architectures.
It serves the authoritative copy of the package archive to the repository
mirrors.

To decide no acceptance of a submitted package, the archive verifies that its
signature was created by an authorized uploader key.
This list of keys has the role of an ACL, 
where some people may upload only
specific packages, while others may be allowed to upload any package.
There are other issues that may prevent acceptance of the package, for example
the license under which the software is distributed.
The package is then built for all supported architectures.
Information about the build environment, captured in “buildinfo” files, is
recorded by the build servers~\cite{bobbio_buildinfofiles_2016,
luo_buildinfoinfrastructure_2016}.
If building is successful, the package may be included into the upcoming
release.

The binary and source packages are included into a package index.
These indices (“Packages” and “Sources”) cover all package contents by their
cryptographic hash.
They also contain meta data such as dependencies.
All indices are covered, again by hash, by a release file which is signed.
The release file is signed by the publishing archive, constituting the
main security feature of plain APT.
Indices, release file, and the packages themselves can now be released by the
archive and distributed to the global mirror network, which acts as content
distribution network.
An APT client can retrieve a package from a repository mirror server and
verify its authenticity by confirming its inclusion into the signed release
file.
A release file has a wall clock validity time of two weeks.

\section{Related work}
\label{sse:related}

We consider research on the security of package managers and on Merkle
tree-based log systems.

\vspace{-3mm}
\paragraph{Security of package managers.}
Cappos et al.~\cite{cappos_look_2008} analyze the security of popular package
managers, among those APT.
They define a threat model focused mainly on adversarial control of a package
mirror.
Expanding on this threat model, Samuel et al.\ consider compromise of signing
keys in the design of The Update Framework (TUF), a secure application
updater~\cite{samuel_survivable_2010}.
To guard against key compromise, TUF introduces a number of different roles in
the update release process, each of which operates cryptographic signing keys.

The following three properties are protected by TUF.
The content of updates is secured, meaning its \emph{integrity} is preserved.
Securing the \emph{availability} of updates protects against freeze attacks,
where an outdated version with known vulnerabilities is served in place of a
security update.
The goal of maintaining the correct combination of updates implies the
\emph{security of meta data}.

An attack is deemed successful under the TUF threat model in either of the
following two cases.  The client installs a different software than the most
current version of the software to be updated.  An attack is also successful,
if the client does not install the most current version, but perhaps leaves an
older version installed, without causing an alert.

Later works focus on role separation and adoption to domain specific threat
models~\cite{kuppusamy_diplomat:_2016, karthik_uptane:_2016}.



\vspace{-3mm}
\paragraph{Formal analysis of transparency logs.}
Dowling et al.~\cite{dowling_secure_2016} as well as Chase and
Meiklejohn~\cite{chase_transparency_2016} formally analyze transparency logs.
The model of Chase and Meiklejohn, the transparency overlay, includes
equivocation.
They prove several cryptographic security properties for this transparency,
which can be instantiated to represent Certificate Transparency as well as
Bitcoin.

In a transparency overlay, the dynamic list commitment~(DLC) is defined as a
commitment to a list of elements that can represent exactly one list e.g., a
Merkle tree root.
The list of elements can only be updated by appending elements, producing a
new commitment for the new list.
The commitments allows to efficiently prove that the list has been operated
append-only, and that an element is part of the list.
Notable properties include the security of cryptographic evidence.
For any kind of violation, there exists evidence provably identifying the
culpable party.
This evidence is infeasible to fabricate.

The overlay extends an abstract system with the roles of log, auditor, and
monitor.
The log stores the events produced by the system, its DLC can be instantiated
with a Merkle tree.
The auditor verifies consistency and inclusion of events without having to
store the entire list maintained by the log.
The monitor retrieves elements from the log, verifies log consistency, and
analyses the new events in order to flag any entries that are considered
problematic.
By exchanging observed commitments, the auditor and monitor can detect
equivocation by the log.

\vspace{-3mm}
\paragraph{Securing package updates with co-signing.}
Nikitin et al.\ develop CHAINIAC, a system for software update
transparency~\cite{nikitin_chainiac:_2017}.
Software developers create a Merkle tree over a software package and the
corresponding binaries.
This tree is then signed by the developer, constituting release approval.
The signed trees are submitted to co-signing witness servers.

The witnesses require a threshold of valid developer signatures to accept a
package for release.
Additionally, the mapping between source and binary is verified by some of
the witnesses.
If these two checks succeed, the release is accepted and collectively signed
by the witnesses.

%
The system allows to rotate developer keys and witness keys, while the root of
trust is an offline key.
It also functions as a timestamping service, allowing for verification of
update timeliness.

Additionally to individual packages, releases of multiple packages are also
supported.
These ``snapshots'' are created by aggregating individual package skipchains.
Over the most recent versions of these, a Merkle tree is constructed and
signed by the witnesses.

\vspace{-3mm}
\paragraph{Transparency systems.}
Certificate transparency (CT) uses Merkle tree logs to provide a public view on
all certificates used in HTTPS~\cite{rfc6962, laurie_certificate_2014,
laurie_computing:_2012, ben_laurie_certificate_2011, rfc6962bis}.
Site operators can observe certificates issued for their domains and help
detect misissuances.
CT is widely deployed and its use continues to expand.
Basin et al.\ develop ARPKI, an alternative public key infrastructure for
server certificates~\cite{basin_arpki:_2014}.
Domain owners choose two certification authorities and an integrity log
server.
The system assumes that a number of public integrity log servers exist.
These are used to provide a globally consistent view on all the certificates
in existence.
Fahl et al.\ suggest a transparency model for Android applications, providing
``Application Transparency''~\cite{fahl_hey_2014}.
In order to secure application updates, the hashes of these are submitted
into a transparency log system.
Melara et al.\ apply Merkle tree-based auditing to secure mobile messaging in
the Continuous identity and key management system
(CONIKS)~\cite{melara_coniks:_2015}.  
The central premise of the approach is that users are changing their keys
frequently, and this process must provide a smooth and secure user experience.
Keys are bound to identities by submitting them into the audit log, which also
serves as a public key directory.

\section{Design}
\label{sse:design}

We propose to extend the signature based APT by a transparency system, adding
on top of the existing security mechanisms.
This system is modeled after the transparency overlay, where a log server
maintains a Merkle tree over a list of submitted items.
The honest operation of the log is verified by auditors and monitors.
In our case, the auditor is integrated into the APT client and the monitor is
a new standalone component.

\subsection{Threat model}
Additionally to existing Debian mechanisms, 
we introduce a log server that maintains a
Merkle tree as part of a transparency overlay.
The system consists of the following components.
Individual maintainers upload signed source packages to the archive.
The archive compiles and distributes binary packages, submitting releases into
the log.

Compromise of components and collusion 
of participants
must not result in a violation of the
following security goals remaining undetected.
A goal of our system is to make it infeasible for the attacker to deliver targeted backdoors.
For every binary, the system can produce the corresponding source code and the
authorizing maintainer.
Defined irregularities, such as a failure to correctly increment version
numbers, also can be detected by the system.

%
%



\subsection{Release log}

The APT release file identifies, by cryptographic hash, the packages, sources,
and meta data which includes dependencies.
This release file, meta data, and source packages are submitted to a log
server operating an append-only Merkle tree, as shown in
Figure~\ref{fig:overview}.
The log adds a new leaf for each file.

\begin{figure}[t]
    \centering
    \begin{tikzpicture}[
            execute at begin node=\strut,
            obj/.style={
                draw,
                rounded corners,
                minimum width=45mm,
            },
            log/.style={
                draw,
                circle,
            },
            submit/.style={
                dashed,
                ->,
            },
        ]
        \node[name=release, obj] { archive-signed release file };
        \node[name=meta, below=2mm of release, obj] { meta data };
        \node[name=src, below=2mm of meta, obj]
            { maintainer-signed source packages };
        \node[name=bin, below=of src, obj] { binary packages };

        \node[name=log, right=15mm of meta, log] { log };


        \draw[submit] (release.east) to node[above, sloped] {submit} (log); 
        \draw[submit] (meta.east) to (log); 
        \draw[submit] (src.east) to (log); 

        \draw[->] (src) to node[right] {reproducible} (bin);
        
    \end{tikzpicture}
    \caption{System overview.}
    \label{fig:overview}
    \vspace{-2mm}
\end{figure}
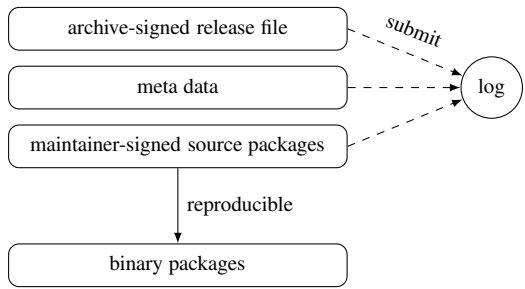

We assume maintainers may only upload signed source packages to the archive,
not binary packages.
The archive submits source packages to one or more log servers.
We further assume that the buildinfo files capturing the build environment are
signed and are made public, e.g. by them being covered by the release file,
together with other meta data.

In order to make the maintainers uploading a package accountable, a source
package containing all maintainer keys is created and submitted into the
archive.
This constitutes the declaration by the archive, that these keys were
authorized to upload for this release.
The key ring is required to be append-only, where keys are marked with an
expiry date instead of being removed.
This allows verification of source packages submitted long ago, using the keys
valid at the respective point in time.

At release time, meta data and release file are submitted into the log as
well.
The log server produces a commitment for each submission, which constitutes
its promise to include the submitted item into a future version of the tree.
The log only accepts authenticated submissions from the archive.
The commitment includes a timestamp, hash of the release file, log identifier
and the log's signature over these.
The archive should then verify that the log has produced a signed tree root
that resolves the commitment.
To complete the release, the archive publishes the commitments together with
the updates.
Clients can then proceed with the verification of the release file.

The log regularly produces signed Merkle tree roots after receiving a valid
inclusion request.
The signed tree root produced by the log includes the Merkle tree hash, tree
size, timestamp, log identifier, and the log's signature.

On the client side, the release file will be retrieved as usual.
Given the release file and inclusion commitment, the client can verify by
hashing that the commitment belongs to this release file and also verify the
signature.
The client can now query the log, demanding a current tree root and an
inclusion proof for this release file.
Per standard Merkle tree proofs~\cite{chase_transparency_2016,
crosby_efficient_2009, rfc6962}, the inclusion proof consists of a list of
hashes to recompute the received root hash. 
For the received tree root, a consistency proof is demanded to a previous known
tree root.
The consistency proof is again a list of hashes.
For the two given tree roots, it shows that the log only added items between
them.
Clients store the signed tree root for the largest tree they have seen, to be
used in any later consistency proofs.
Set aside split view attacks, which will be discussed later, clients verifying
the log inclusion of the release file will detect targeted modifications of
the release.

The procedures described do not add any tasks for an ordinary user of APT, or
for a maintainer.
One further consideration is the availability of the log.
To improve log availability, elements can be submitted into multiple
logs.
Clients would then contact all of these to validate a release, and require a
quorum.

\subsection{Removal of elements}

Source packages sometimes must be removed from the archive.
The reason is usually that the distribution license for a particular file in
the package is not acceptable under the project's guidelines.
The offending packages are then removed from the archive, and a new version of
the package is prepared, fixing the issues.

Since the log must be append-only, and also provide the source code for all
packages, this legal requirement cannot be fulfilled directly.
In order to perform the removal, a removal notice is submitted
to the log.
It consists of a statement signed by the archive, specifying which source
package was removed, the time and reason of removal.
It will be returned in response to requests for the original source.

\subsection{Hidden versions}

The hidden version attack attempts to hide a targeted backdoor by following
correct signing and log submission procedures.
It may require collusion by the archive and an authorized maintainer.
The attacker prepares targeted malicious update to a package, say
version~v1.2.1, and a clean update~v1.3.0.
The archive presents the malicious package only to the victim when it wishes
to update.
The clean version~v.1.3.0 will be presented to everybody immediately
afterwards.

A non-targeted user is unlikely to ever observe the backdoored version,
thereby drawing a minimal amount of attention to it.
The attack however leaves an audit trail in the log, so the update itself can
be detected by auditing.

A package maintainer monitoring uploads for their packages using the log would
notice an additional version being published.
A malicious package maintainer would however not alert the public when this
happens.
This could be construed as a targeted backdoor in violation of the stated
security goals.

To mitigate this problem a minimum time between package updates can be
introduced.
This can be achieved by a fixing the issuance of release files and their log
submission to a static frequency, or by alerting on quick subsequent updates to
one package.

In the hidden version attack, the attacker increases a version number in order
to get the victim to update a package.
The victim will install this backdoored update.
The monitor detects the hidden version attack due to the irregular release file
publication.
There are now two cases to be considered.
The backdoor may be in the binary package, or it may be in the source package.

The first case will be detected by monitors verifying the reproducible builds
property.
A monitor can rebuild all changed source packages on every update and check if
the resulting binary matches.
If not, the blame falls clearly on the archive, because the source does not
correspond to the binary, which can be demonstrated by exploiting reproducible
builds.

The second case requires investigation of the packages modified by the update.
The source code modifications can be investigated for the changed packages,
because all source code is logged.
The fact that source code can be analyzed, and no analysis on binaries is
required, makes the investigation of the hidden version alert simpler.
The blame for this case falls on the maintainer, who can be identified by
their signature on the source package.
If the upload was signed by a key not in the allowed set, the blame falls on
the archive for failing to authorize correctly.

If the package version numbers in the meta data are inconsistent, this
constitutes a misbehavior by the submitting archive.
It can easily be detected by a monitor.
Using the release file the monitor can also easily ensure, by demanding
inclusion proofs, that all required files have been logged.



%
%
%
%

\section{Log validation}
\label{sse:monitor}

Monitors are a required component to ensure the log operates correctly as well
as a building block against equivocation.
Monitors communicate with the log, verifying consistency.
In case of misbehavior, a monitor would raise an alert and provide the
cryptographic proof that comes with it.
Many monitors can observe one log.

Monitors retrieve all additions into the log, allowing them to execute custom
investigation functions on the logged items.
The consistency of the log is always checked before the items are processed
for flagging.
Each monitor can have different investigation rules, or none at all, in which
case it only monitors the append-only operation of the log.

\subsection{Log monitoring functions}

The primary function of a monitor is to ensure that the log server maintains
its list of items by only appending items.
The monitor initializes by retrieving all items from the log server, as well
as the signed tree root.
The monitor verifies the correctness of the tree root by recomputing the hash
tree and verifying the signature.

It will now continuously poll the log server if a new tree root is available.
Should a new tree root be published, the monitor retrieves the tree root and
all items that were added since the previous tree root the monitor had
observed.
The monitor can now recompute the new hash tree in order to verify the tree
root signed by the log.

In this approach, the monitor stores all the items in the list maintained by
the log.
It will also observe any new items.
It is therefore possible for the log to investigate all items and raise alerts
based on the findings.

%
%


\subsection{Monitor examination of events}

Additionally to their fundamental task in keeping the log honest, the monitors
can fulfill additional roles by investigating the items kept in the log.
In the following, several such functions are described.

For each of these, we will assume that the monitor continuously updates its
view of the log by retrieving any new items the log has added.
Before calling the investigating functions, the list of new items is filtered.
In general, we are only interested in release files.
The signature on the release file is verified.
The functions are then called for each of the items remaining after filtering
and verification.
We will also assume that the investigating functions have access to the log
content.
They need to be able to determine if an element is part of the log,
identify the preceding release file and parse it.

\subsubsection{Release file consistency}
A basic check that requires a monitor is to verify the consistency of the
release file.
This check ensures that the archive submitted a consistent meta data
state into the log.

\begin{algorithm}[ht]
\vspace{-3mm}
\begin{codebox}
    \Procname{Completeness(releaseFile):}
    \li \If \Not verifySignature(releaseFile):
        \li \Then \Alert releaseFile
    \End
    \zi \Comment indices (Packages, Sources) logged
	\li \For indexFile \In releaseFile: \Do
		\li \If \Not isInLog(indexFile):
            \li \Then \Alert indexFile
        \End
    \End
    \zi \Comment source packages logged
    \li \For sourcesFile \In releaseFile.sourcesFiles: \Do
        \li \For sourcePkg \In sourcesFile: \Do
            \li \If \Not isInLog(sourcePkg):
                \li \Then \Alert sourcePkg
            \End
        \End
	\End
    \li \Return
\end{codebox}
    \vspace{-5mm}
    \caption{Verify elements covered by a release file are logged.}
    \label{alg:coverage}
\end{algorithm}

The first process is Algorithm~\ref{alg:coverage}.
The following verifications are executed for each new release
file added to the log.
For each \textit{Sources} and \textit{Packages} file, their presence in the
log, with matching hash, must be ensured.
For every Sources file, all listed source packages must be in the log with a
matching hash.

\begin{algorithm}[ht]
\begin{codebox}
    \Procname{SourceAvailable(releaseFile):}
    \li sourcesFiles \gets releaseFile.sourcesFiles
	\li \For packagesFile \In releaseFile.packagesFiles: \Do
        \li \For package \In PackagesFile: \Do
            \zi \Comment no source available
            \li \If \Not sourcePresent(package, sourcesFiles):
                \li \Then \Alert package, sourcesFiles
            \End
        \End
    \End
    \li \Return
\end{codebox}
    \vspace{-5mm}
    \caption{Verify that all sources are available.}
    \label{alg:src}
\end{algorithm}

Each binary package, as enforced by Algorithm~\ref{alg:src}, must come
with a corresponding source package.
This is necessary in itself and also for later checks such as
reproducibility.

\begin{algorithm}[ht]
\vspace{-2mm}
\begin{codebox}
    \Procname{VersionConsistency(releaseFile):}
    \li sourcesFiles \gets releaseFile.sourcesFiles
	\li \For packagesFile \In releaseFile.packagesFiles: \Do
        \li \For package \In PackagesFile: \Do
            \zi \Comment compare with previous release file
            \li \If \Not metaChanged(package):
                \li \Then \Continue
            \End
            \li \If \Not versionIncremented(package):
                \li \Then \Alert package
            \End
            \li source \gets getSource(package, sourcesFiles)
            \li \If \Not metaChanged(source) \And
                \li \quad \Not buildinfoChanged(package):
                \li \Then \Alert package, source
            \End
        \End
    \End

	\li \For sourcesFile \In sourcesFiles: \Do
        \li \For source \In sourcesFile: \Do
            \li \If \Not metaChanged(source):
                \li \Then \Continue
            \End
            \li \If \Not versionIncremented(source):
                \li \Then \Alert source
            \End
        \End
    \End

    \li \Return
\end{codebox}
    \vspace{-5mm}
    \caption{Verify version consistency of the release file.}
    \label{alg:version}
\end{algorithm}

We enforce version bumps for modified packages by comparing the meta data
fields of all source and binary packages.
The meta data contains in particular the hash, version number, and
dependencies.
If this is not enforced, a client might see and install a new package with
different meta data than others.

For this purpose, the monitor maintains several data items associated with the
package name.
Out of the Packages and Sources files, the package versions and meta data
block are extracted.
The meta data contains in particular the package hash and its dependencies.


For the given package names, the entire entry with meta data and hash of
the package is compared for equality to the stored one.
In case of changes, the local package entry is updated and the configured
notification actions taken.
If a change in meta data is detected in Algorithm~\ref{alg:version}, the version
number must be incremented as well.



Monitors may provide analysis functions additional to those discussed so far.
In such a case, the monitor should meet the checks of Algorithms
\ref{alg:coverage}, \ref{alg:src}, and \ref{alg:version}.
These ensure that the meta data covered and provided by the release file is in
a sane state and fit for further analysis.

\subsubsection{Frequency}

Another monitor function is to observe the frequency in which release files
are produced and made available.
The monitor needs to closely follow the log and know the expected release
schedules.
The monitor will produce an alert when an irregular release interval occurs.

This mechanism enables investigators to focus attention on possible
misbehavior.
There should also be an alert, if the archive stops publishing releases
unexpectedly.
This process is asynchronous and runs continuously.

If an alert occurs, the monitor also has the necessary tools to help
investigation.
As it has all source packages available, it can produce the difference in
source code compared to the last release.
The signatures on the source packages identify the maintainers involved.
If the signature was produced by a valid key, the maintainer is responsible
for any issues the source code exhibits.
In case the signature is invalid, the archive has allowed violation of the
submission access control.

\subsubsection{Package maintainers}

To verify the adherence of the archive to the package upload ACL, all source
packages that were changed in a release file are investigated.
A package is also investigated if its meta data or buildinfo files changed.

For the updated packages, if any of the signatures cannot be verified as
having been created by a valid maintainer key, an alarm is raised.
Not all uploaders may update every package.
In the policy map of signing keys acceptable for each package, keys of full
members are added to every package, other uploader keys are added to their
respective packages.


Additionally to this generic check, individual maintainers of packages might
want to see which of their packages were updated, and by whom.
After submission, maintainers keep track if their package was published.
They should also observe if a new upload signing key is published under their
name.
The maintainers of the upload ACL should also keep track of the keys published
by the archive.



%
%
%



\subsubsection{Reproducible builds}

For every package that changes, a monitor can verify the reproducible build
property.
This check is important to make sure that the archive publishes the software
intended by the maintainer, shown in Algorithm~\ref{alg:rebu}.
If the relationship between source and binary is not verified, the archive
could build backdoors into binaries, where the backdoors are not reflected in
the source code.

\begin{algorithm}[ht]
\vspace{-2mm}
\begin{codebox}
    \Procname{Reproducible(releaseFile):}
    \li \For package \In releaseFile.allpackages: \Do
        \li \If metaChanged(package):
            \Then \zi \Comment rebuild for all architectures and compare
            \li correct \gets isRepro(package, releaseFile)
            \li \If \Not correct:
                \li \Then \Alert package
            \End
        \End
    \End
    \li \Return
\end{codebox}
    \vspace{-5mm}
    \caption{Verify transformation from source to binary.}
    \label{alg:rebu}
\end{algorithm}

Given a new release file, the monitor would determine all modified source
packages by comparing the new Sources file to the previous one.
All changed packages are now rebuilt.
The hashes of the binary packages must match the hashes provided in the
Packages files.
If any mismatches occur, an alarm is raised.
The blame falls on the archive, which has published a non-reproducible
package, possibly including a backdoor.

\section{Equivocation}
\label{sse:equivocation}

The most significant attack by the log or with the collusion of the log is
equivocation.
In a split-view or equivocation attack, a malicious log presents different
versions of the Merkle tree to the victim and to everybody else.
Each tree version is kept consistent in itself.
The tree presented to the victim will include a leaf that is malicious in some
way, such as an update with a backdoor.
It might also omit a leaf in order to hide an update.
This is a powerful attack within the threat model that violates the security
goals and must therefore be defended.
A defense against this attack requires the client to learn if they are served
from the same tree as the others.

In some log systems equivocation is addressed with gossiping, or with
monitors and auditors~\cite{chase_transparency_2016, chuat_efficient_2015,
crosby_efficient_2009, nordberg_gossiping_2017}.
The auditor, as a functionality embedded with the client, exchanges tree roots
with the monitor.
They request consistency proofs between their own observed tree root and the
tree root of the other party.
Both are then able to notice if the log presents a different view to the other
party, detecting equivocation.
Because the tree roots are signed, the log can be attributed as the malicious
party.

To some extent, we envision that monitors are run much like package mirrors are
today.
Some organizations may donate their resources for the public good, others may
opt to just improve their own operations.
To provide clients with a monitor after a new installation, a list of
trustworthy monitors would need to exist.
For the distribution to be confident in their reliability, it ultimately would
need to run them.
This runs contrary to the monitor's task of holding the distribution
accountable.
It begs the question how a meaningful trust boundary can be drawn through the
distribution infrastructure.
We conclude that auditor-monitor communication is not sufficient to address
equivocation in our case.

Log operations can be done on multiple logs e.g., by submitting everything into
two logs.
While this does help with equivocation in principle, the logs are in our
scenario operated by the same group.
It is therefore prudent to assume that if one of these is compromised, the
second one would also be.

This approach can be used for redundancy, where a quorum of logs are queried
and must provide the expected answer.
If one log fails or must be taken offline after misbehaving, the clients could
still be satisfied with responses of the other logs.

\subsection{Design for tree root cross logging}

There is limited advantage to using multiple logs by the same operator.
Defenses against equivocation are closely related to operator diversity.
To extend the pool of eligible secondary log operators, interoperation between
logs of different organizations or even domains is required.
To make this interoperation realistic, the interface should be small and
simple, which we will strive for in the following.



One approach for detecting equivocation is logging of tree roots between
cooperating logs.
This approach requires multiple log servers.
We now assume different logs exist, run by different operators.
These could be other Linux distributions, or possibly a Certificate
Transparency log made compatible.

Consider one log accepting new list items, for now dubbed the committing log.
This log regularly creates a new signed tree root on inclusion of list items.
This log will now submit this tree root, when it is created in response to a
new release file, into another log, the witnessing log.
Additionally to performing its native tasks, the witnessing log accepts the
submitted tree root as a new list element and includes it into its Merkle tree.
The inclusion commitment retained from the witnessing log, as well as the
submitted tree root, will be provided to the archive by the committing log.

\begin{figure}[t]
    \centering
    \begin{tikzpicture}[
            scale=.75,
            level distance=5mm,
            level/.style={
                sibling distance=6em/#1,
            },
            edge from parent/.style={
                draw,
            },
            inner/.style={
                draw,
                circle,
                minimum width=2.2mm,
                inner sep=0,
            },
            leaf/.style={
                draw,
                rectangle,
                minimum width=1.8mm,
                minimum height=1.8mm,
                inner sep=0,
            },
            marked/.style={
                draw,
                fill=gray!50,
                circle,
            },
            cross/.style={
                shorten >=\pgflinewidth,
                shorten <=\pgflinewidth,
            },
        ]
        \node[name=roota, inner] {}
        child {
            node[name=l, inner] {}
                child {
                    node[name=n0, inner] {}
                }
                child {
                    node[name=n1, inner] {}
                }
        }
        child[missing]
        ;

		\node[left=of roota] { 1.~a) };
		\node[name=n2, right=2em of n1, inner] {};
		\draw (roota) to (n2);

        \foreach \n in {0, ..., 2} {
            \node[name=l\n, below=1em of n\n, leaf] {};
            \draw (l\n) to (n\n);
        }   
    
        \node[name=rootb, right=30mm of roota, inner] {}
        child {
            node[inner] {}
                child {
                    node[name=p0, inner] {}
                }
                child {
                    node[name=p1, inner] {}
                }
        }
        child {
            node[inner] {}
                child {
                    node[name=p2, inner] {}
                }
                child {
                    node[name=p3, inner] {}
                }
        }
        ;

        \foreach \n in {0, ..., 3} {
            \node[name=m\n, below=2mm of p\n, leaf] {};
            \draw (m\n) to (p\n);
        }   
    
		\node[left=of rootb] { 1.~b) };
		\node[marked, minimum width=2.2mm, inner sep=0,] at (rootb) {};
		\node[marked, leaf] at (m3) {};

		\begin{scope}[
				level/.style={
					sibling distance=10em/#1,
				},
		]
        \node[name=root2, inner,
				below=12mm of p0,
                xshift=-3mm,
			] {}
            child foreach \x in {0,1} {
                node[name=q\x, inner] {}
                child foreach \y in {0,1} {
                    node[name=q\x\y, inner] {}
                    child foreach \z in {0,1} {
                        node[name=q\x\y\z, inner] { }
                    }
                }
            }
        ;
		\end{scope}
        
        \foreach \x in {0, 1}
            \foreach \y in {0, 1}
                \foreach \z in {0, 1} {
                    \node[name=i\x\y\z, below=2mm of q\x\y\z, leaf] {};
                    \draw (i\x\y\z) to (q\x\y\z);
                }
        ;

		\node[left=29.5mm of root2] { 2. };
		\node[marked, leaf] at (i111) { };
    
		\draw[dashed, ->] (rootb) to [out=0, in=45, out looseness=1.75] (i111);
    \end{tikzpicture}
    \caption{Tree root cross logging.}
    \label{fig:cross}
    \vspace{-3mm}
\end{figure}

This relationship is shown in Figure~\ref{fig:cross}.
The upper row shows one log of list size three adding a fourth element.
The lower row shows a different log.
The log with three elements in situation 1a adds an element in step 1b,
resulting in a new tree root.
It then submits this tree root in step 2 as element for log inclusion to the
lower log.
The lower log accepts this new element into its list.
By publishing the tree root into the witnessing log, the primary log publicly
commits to a history of its tree.
This commitment can be used to detect equivocation through monitoring of the
witnessing log.

%
%
%


When the client now verifies a log entry with the committing log, it also has
to verify that a tree root covering this entry was submitted into the
witnessing log.
Additionally, the client verifies the append-only property of the witnessing
log.

The witnessing log introduces additional monitoring requirements.
Next to the usual monitoring of the append-only operation, we need to check
that no equivocating tree roots are included.
To this end, a monitor follows all new log entries of the witnessing log that
are tree roots of the committing log.
The monitor verifies that they are all valid extensions of the committing
log's tree history.

By using tree roots as a generic interoperability layer between logs,
cooperation between different operators and domains is enabled.
Each log only needs to add one new leaf type in order to participate.
This is easier to achieve than, for example, making a CT log compliant with all
the requirements for software transparency logging.

\subsection{Auditor}

To contribute to the security provided by root cross logging, an auditor
component needs to contact the second log.
The auditor verifies that its own view on the log is committed to the
witnessing log.
It validates the correct operation of this second log with the established
proof mechanisms.

The committing log, denoted as log~A here, has submitted a tree root into the
witnessing log, log~B.
The auditor is now assumed to be provisioned by the archive with this tree
root and the corresponding inclusion promise from log~B.
An element, in our case the release file, is verified by requesting an
inclusion proof for it from log~A.
The inclusion proof is requested specifically for the tree size of the root
given by the archive, ensuring that it actually covers this element.
The auditor then requests a consistency proof from that tree root to a previous
known tree root.
After verifying the two proofs, we have now established that log~A has
submitted a tree root into log~B and that this tree root is consistent with the
auditor's view on the tree.


We proceed to verify inclusion into the witnessing log.
Using the knowledge of the committed tree root and the inclusion promise by
the witnessing log, we can request an inclusion proof from the witnessing log.
For the tree root returned with that inclusion proof and a previous known tree
root, we request a consistency proof from the witnessing log.




The auditor has now established that log~A has presented the same log view to
the auditor and to log~B.
Crucially, the auditor can not be certain that its view is the only one that
log~A has committed into log~B.
To ensure that, every element in log~B must be inspected.
This task naturally falls to a monitor, which is discussed next.

\subsection{Monitor}

The task of the monitor in cross logging is to make sure that one log only
commits to one view.
This requires the monitor to keep track of both the committing and the
witnessing log.
It monitors the committing log, log~A.
For simplicity, we assume that the monitoring process consists of downloading
and storing all new log elements, and verifying signed tree roots.
The knowledge of log~A fixes the monitor's view on the log and enables it to
generate tree roots for all tree sizes.
This is the usual procedure for monitoring.

Extending this, the monitor will also monitor the witnessing log~B.
In the following, all list elements of log~B are investigated.
We are interested in all elements that are tree roots of log~A, so we filter
for these.

For each of these elements, first the signature is verified.
In a second step, the Merkle tree root in this element is recomputed using the
knowledge of the elements of log~A.
The computed tree root is compared to the element in log~B.
If these checks succeed, the element corresponds to the view on log~A that the
monitor has.

If the verification procedure succeeds for all elements that are signed tree
roots of log~A, then the monitor has established that log~A has only ever
committed one view to its witness log~B.
Should a tree root not correspond to the known list of elements of log~A,
equivocation has been detected.


If log~B is dishonest, it cannot frame log~A, because tree roots are signed.
Log~B also cannot omit elements.
Log~A has obtained inclusion promises from log~B, enabling it to demonstrate
that a given tree root was submitted into log~B.
If both logs collude in equivocation, no security can be achieved in this
system.

\subsection{Security}

The approach presented in this paper protects the clients directly that
additionally check the witnessing log, if the logs do not collude.
If the committing and witnessing logs do not collude, the committing log will
be unable to observe that the client verifies its honesty by querying the
witnessing log.
It is therefore unable to distinguish between clients that do this additional 
check and those that do not.
This results in herd immunity for clients that do not check the witnessing log,
provided that a proportion of clients actually does this check.
In any case, it is necessary that some of the clients of the committing log
verify the witnessing log.


The more logs participate by including the tree roots of the other logs, the
harder attacks become.
One honest log is enough to detect equivocation.
For package authentication, each distribution could run one or two logs, all
of which cooperate by tree root logging.
To mitigate the risk of collusion, multiple logs run by different operators
are required.


\section{Evaluation}
\label{sse:evaluation}

In the following, we will informally analyze the security properties of the
main design.
We further implement the system and feed it two years worth of distribution
updates, noting performance characteristics and detected irregularities.

\subsection{Software transparency as a secure pledged transparency overlay}

We claim that our log design described previously implements a
secure transparency overlay as proposed and proven by Chase and
Meiklejohn~\cite{chase_transparency_2016}.
This constitutes our primary security argument, extended in separate arguments
by monitoring functions and protection against equivocation.

In the following we show how a secure overlay is instantiated from the
proposed software transparency mechanism.
In the parlance of the secure overlay, the log, monitor, and auditor are part
of the ``overlay''.
The ``system'' is an existing infrastructure to be secured with the
transparency overlay.
In our case, the system is the software distribution via the archive and the
APT client.

\newtheorem{theorem}{Theorem}
\begin{theorem}
    Software transparency is a secure pledged transparency overlay.
    \label{theo}
\end{theorem}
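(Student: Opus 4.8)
The plan is to prove the statement by exhibiting an explicit instantiation of the Chase--Meiklejohn abstraction and then checking each of its security experiments by reduction to the underlying cryptographic primitives. First I would fix the mapping of roles: the \emph{system} is the archive together with the APT client, the \emph{events} are the release files (along with the meta data and source packages they cover by hash), the dynamic list commitment is the Merkle tree root, and the \emph{log}, \emph{auditor}, and \emph{monitor} are exactly the components introduced in Sections~\ref{sse:design} and~\ref{sse:monitor}. I would then write down concretely the algorithms the overlay definition demands---the append operation that adds one leaf per submitted file, the commitment issued on submission, and the generation and verification of inclusion and consistency proofs---so that the remaining work reduces to confirming these objects satisfy the definitional requirements rather than inventing them.

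The substance of the argument is to replay each overlay security experiment against this instantiation. For the append-only and inclusion guarantees, I would argue that any adversary producing two accepted tree roots inconsistent with the claimed history, or an inclusion proof for an element not actually in the committed list, yields a collision in the Merkle tree's hash function; this is the standard reduction for Merkle-tree dynamic list commitments, which I would invoke rather than re-derive. For the accountability property---that every detected violation is accompanied by evidence provably identifying the culpable party and infeasible to fabricate---I would trace each alert condition to a signed artifact: an equivocating tree root is signed by the log, a release file inconsistent with the logged indices is signed by the archive, and a source package uploaded under a key outside the logged append-only key ring implicates either that maintainer or, failing authorization, the archive. In each case forging the evidence reduces to forging a signature, so existential unforgeability of the signature scheme closes the gap.

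The distinctive ingredient, and what I expect to be the main obstacle, is the \emph{pledged} modifier. The log issues a signed commitment promising future inclusion \emph{before} the item appears in any tree, so there is a temporal window in which an honest auditor holds a pledge but no inclusion proof yet resolves it. I would handle this by modeling the commitment as an additional signed obligation in the overlay state and showing that the archive's check---that a signed tree root resolving the commitment is subsequently produced---together with monitor consistency checking forces every pledge to be redeemed by inclusion of \emph{exactly} the pledged item, matched by its hash. Defeating this undetected requires either a hash collision between the pledged item and the included leaf, or a signature on a resolving root that was never produced, both of which reduce to primitives already in play. The delicate part of the write-up will be showing that the pledge opens no new equivocation avenue of its own: a commitment cannot be honoured toward one party while violated toward another without leaving signed, attributable evidence, which is precisely where the witnessing-log construction of Section~\ref{sse:equivocation} must be folded into the argument.
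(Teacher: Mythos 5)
Your first paragraph is, in essence, the paper's entire proof: the paper instantiates the Chase--Meiklejohn framework by specifying \texttt{GenEventSet} (the archive is the sole generator of events: source packages, meta data files, signed release files), the \texttt{Log} protocol (the archive submits these to the log), and the \texttt{CheckEntry} protocol (the auditor sits inside the APT client). Everything after that in your plan is work the paper deliberately does not do, because it does not have to: Chase and Meiklejohn have already proven, generically, that a pledged transparency overlay built over a Merkle-tree dynamic list commitment and a secure signature scheme satisfies the overlay security experiments. Re-running the hash-collision and signature-forgery reductions for this particular instantiation is sound but redundant; the entire point of the theorem is that the instantiation inherits those proofs for free, and the proof obligation reduces to exhibiting the mapping you give in your opening paragraph.

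The genuine error is in your final step. The ``pledged'' modifier is not a novel obstacle this paper must overcome: it is precisely the feature of the cited framework (modeling Certificate Transparency's signed inclusion promises) that Chase and Meiklejohn define and prove secure, so the temporal window between promise and inclusion is already handled. More importantly, your claim that the witnessing-log construction of Section~\ref{sse:equivocation} ``must be folded into the argument'' is wrong on both sides. Within the overlay framework, a pledge honoured toward one party and violated toward another is detected by the auditor--monitor exchange of signed tree roots that is already part of the overlay's own protocols; no second log appears anywhere in that model. And in the paper's architecture, tree root cross logging is explicitly a \emph{separate} argument layered on top of the theorem (the paper calls the theorem its ``primary security argument, extended in separate arguments by monitoring functions and protection against equivocation''), motivated by the practical concern that auditor--monitor gossip is insufficient when monitors and log are run by the same organization. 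If your proof of the theorem depended on the witnessing log, you would be proving a different and weaker statement---one whose security requires a second, non-colluding operator---rather than the instantiation claim the theorem actually makes.
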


\begin{proof}
    We use the method of Chase and Meiklejohn to instantiate the secure pledged
    transparency overlay.
    Two parts need to be demonstrated.
    First, the function creating the system events (\texttt{GenEventSet})  needs to
    be instantiated.
    Second, we need to define for all transparency overlay protocols which parts of
    the software transparency system interact with them, namely for the
    \texttt{Log} and \texttt{CheckEntry} protocols.
    The \texttt{Log} protocol defines submission of events into the log.
    In the \texttt{CheckEntry} protocol, the system interacts with the auditor
    component of the transparency overlay.
    The auditor in turn verifies log inclusion and log consistency, given an event
    and a corresponding inclusion promise of the log.

    \texttt{GenEventSet.} \quad
    Elements are solely generated by the archive.
    They can be source package files, meta data files, and signed release
    files.

    \texttt{Log protocol.} \quad
    In the \texttt{Log} protocol, elements are submitted to the log.
    The archive is the sole originator of events and submits these to the log.

    \texttt{CheckEntry protocol.} \quad
    We designate this part of the system to be the APT client, making the auditor
    effectively part of the APT client.
\end{proof}

We note that a release is now produced jointly by the archive and the log, not
just the archive anymore.
The client is now able to verify the transparency property on release files.

The functions of the log, the auditor and monitor are implemented as
prescribed by the overlay.
We conclude that the proposed software transparency mechanism constitutes a
secure transparency overlay.
This allows the system to build on proven security properties.




\paragraph{Detection of targeted backdoors.}

Using the secure overlay, we will demonstrate that our system detects targeted
backdoors, achieved by tailored monitor services.
Assume a malicious or compromised maintainer inserts a backdoor into a source
package.
In order to be accepted into the archive, it must be signed by their key.
Should the archive not enforce this, a monitor verifying maintainer signatures
would notice and alert.
It can therefore be attributed to the maintainer and the source code is
present for analysis.
A maintainer cannot upload binary packages without help by the archive,
because we assume only source uploads are allowed.



If the archive modifies a source package, this package lacks a valid
maintainer signature.
This is detected by a monitor validating the maintainer signatures and upload
policy.
If the archive modifies a binary package without modifying the corresponding
source, this is detected by a monitor verifying the reproducible builds
property.
Modifications by the log are discovered, because the release file signed by
the archive covers all other files.

\subsection{Performance}

Our system uses the Trillian generic Merkle tree
implementation~\cite{_trillian:_2017} which is also employed in some CT logs.
We implement our log functionality using this hash tree.
The log offers an HTTPS interface for JSON objects for element submission and
the auditor and monitor functionality.
The log is instantiated with the SHA256 hash function and the ECDSA-P256
signature scheme.

Starting at the inception of the Debian ``stretch'' release, we replay two
years of Debian updates.
The historic updates are retrieved from the \url{snapshot.debian.org} service
before the experiment.
This results in 3010 release files, starting on 2015-04-25 and ending on
2017-06-17.
The ``stretch'' release is, at that point in time, a rolling release.
As such it experiences much more updates than a final release that only gets
updates for security support.

In our measurement setup, the log resides on a different machine than the
submission component, auditor, and monitor.
The network communication with the log consists of HTTPS requests.
For each of the release files, we first submit the release meta data and
source packages into the log and then run the auditor afterwards.
Given the inclusion promise for the release file, the auditor validates
the log operation.
It retrieves an inclusion and a consistency proof with GET requests from the
log and verifies these.
After all releases are logged and validated, the monitor fetches the log
elements using GET requests and executes its verification functions.

\begin{figure}[t]
    \centering
    \begin{tikzpicture}
        \begin{axis}[
                height=40mm,
                width=.9\columnwidth,
                boxplot/draw direction=y,
                xtick={1,2,3,4},
                xticklabels={ incl. tx, incl. rx, cons. tx, cons. rx },
                ylabel=traffic per request,
                y unit=kB,
                ytick={0, 1000, 2000, 3000},
                yticklabels={0, 1, 2, 3}, 
                ymin=0,
            ]
            \addplot[boxplot] table [y=incl_tx, col sep=comma] {audit.dat};
            \addplot[boxplot] table [y=incl_rx, col sep=comma] {audit.dat};
            \addplot[boxplot] table [y=cons_tx, col sep=comma] {audit.dat};
            \addplot[boxplot] table [y=cons_rx, col sep=comma] {audit.dat};
            
        \end{axis}
    \end{tikzpicture}
    \vspace{-5mm}
    \caption{Auditor traffic to the log.}
    \label{fig:auditor}
    \vspace{-2mm}
\end{figure}
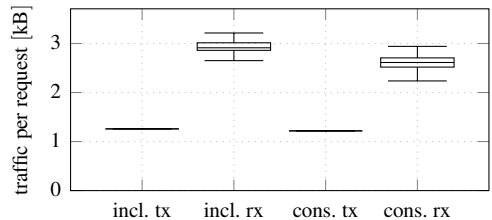

We observe that the duration of the log submissions is dominated, by several
orders of magnitude, the transfer duration incurred by submitting source
packages.
As the auditor runs on every client, its performance is most critical.
In terms of computation, only some signature verifications and hash
computations are added.
Next, we measure the network layer PDU traffic between auditor and log, using
the Linux \texttt{net\_cls} cgroups and netfilter.
The data in Figure~\ref{fig:auditor} shows the traffic caused by inclusion
proofs and consistency proofs during auditor operation.
The request for the inclusion proof requires sending 1.3\,kB and
receiving 2.9\,kB on average.
The consistency proof is requested for the tree size of the previous release
file.
The consistency proof requires receiving slightly less data, on average 2.6\,kB.
Note that these numbers include the TLS handshake.


One optimization that can be done for the client is to deliver the commonly
required proofs over the mirror instead of contacting the log separately.
There need to be several generations of consistency proofs on the mirror, in
order to support clients that do not retrieve every new release.
For 28~generations of release files covering one week, the consistency proofs
require about 18\,kB.
This method scales linearly in storage size per generation.

\begin{figure}[t]
    \centering

    \begin{subfigure}{.49\linewidth}
        \begin{tikzpicture}
            \begin{axis}[
                    width=\textwidth,
                    height=30mm,
                    xlabel=number of leaves,
                    xtick={100000, 200000},
                    xticklabels={1, 2},
					every x tick scale label/.style={
						at={(xticklabel* cs:1.03,0cm)},
                        anchor=near xticklabel,
                    },
                    ylabel=size,
                    y unit=MB,
                    ytick={0, 100000000, 200000000, 300000000, 400000000},
                    yticklabels={0, 100, 200, 300, 400},
                    ymin=0,
                    scaled y ticks=false,  
                ]
                \addplot [] table
                    [x=tree_size, y=db_du, col sep=comma]
                    {submit.dat};
            \end{axis}
        \end{tikzpicture}
        \caption{Tree database in MB.\strut}
        \label{fig:du-log}
    \end{subfigure}
    \begin{subfigure}{.49\linewidth}
        \begin{tikzpicture}
            \begin{axis}[
                    width=\textwidth,
                    height=30mm,
                    xlabel=number of leaves,
                    xtick={100000, 200000},
                    xticklabels={1, 2},
					every x tick scale label/.style={
						at={(xticklabel* cs:1.03,0cm)},
                        anchor=near xticklabel,
                    },
                    ylabel=size,
                    y unit=GB,
                    ytick={0, 100000000000,200000000000,300000000000,400000000000},
                    yticklabels={0, 100, 200, 300, 400},
                    ymin=0,
                    scaled y ticks=false,
                ]
            \addplot [] table
                [x=tree_size, y=fs_du, col sep=comma]
                {submit.dat};
            \end{axis}
        \end{tikzpicture}
        \caption{Source packages in GB.}
        \label{fig:du-disk}
    \end{subfigure}
    \vspace{-1.5mm}
    \caption{Disk usage of the log server.}
    \label{fig:log}
\end{figure}
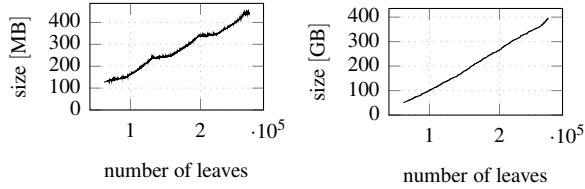

On the log side, the disk usage grows with the submitted elements.
The graphs in Figure~\ref{fig:log} show the occupied storage against the tree
size.
After two years, with around 270\,000 elements logged, the database uses
443\,MB and the package contents occupy an additional 396\,GB.

\begin{figure}[t]
    \centering
    \begin{tikzpicture}
        \begin{axis}[
                boxplot/draw direction=y,
                xtick={1,2,3,4},
                xticklabels={\strut parse,\strut completeness,\strut
                    version,\strut source},
                ylabel=duration,
                y unit=s,
                ymax=110,
                height=35mm,
                width=.9\columnwidth,
            ]
            \addplot[boxplot] table
                [y=parse, col sep=tab]
                {mon_t_func.dat};
            \addplot[boxplot] table
                [y=completeness, col sep=tab]
                {mon_t_func.dat};
            \addplot[boxplot] table
                [y=version, col sep=tab]
                {mon_t_func.dat};
            \addplot[boxplot] table
                [y=src, col sep=tab]
                {mon_t_func.dat};
        \end{axis}
    \end{tikzpicture}
    \vspace{-4mm}
    \caption{Processing time for monitor functions.}
    \label{fig:mon_t}
\end{figure}
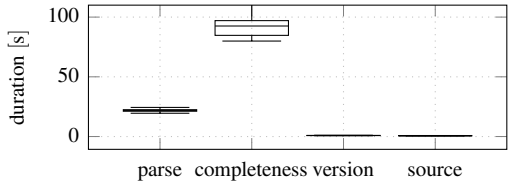

The monitor operates asynchronously, but should issue alerts timely.
The processing time of the different monitor functions implemented is shown in
Figure~\ref{fig:mon_t}.
Parsing the release file and meta data takes about 20 seconds.
Checking if all files are logged takes over 80 seconds, as the process
accesses all files logged for the release and compares the cryptographic hash.
The version comparison and check for presence of a source package are then
negligible.

We conclude that the performance of all measured functions is sufficient for
actual use.
Even smaller projects should be capable of running a log server logging source
packages.

\subsection{Detected irregularities}

The task of the monitor is to flag suspicious elements in the log and raise
alerts.
We now discuss anomalies discovered by our monitor functions when applied to
the historic data.

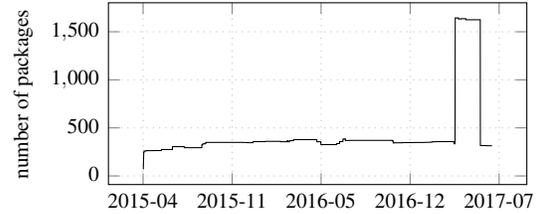
\begin{figure}[t]
    \centering
    \begin{tikzpicture}
        \begin{axis}[
            height=40mm,
            width=.9\columnwidth,
            date coordinates in=x,
            xticklabel={\year-\month},
            ylabel=number of packages,
        ]
        \addplot [] table
            [x=date, y=src-miss, col sep=tab]
            {mon_f_alert.dat};
        \end{axis}
    \end{tikzpicture}
    \vspace{-2mm}
    \caption{Corresponding source missing.}
    \vspace{-3.5mm}
    \label{fig:src-miss}
\end{figure}

As shown in Figure~\ref{fig:src-miss}, there are continuously several
hundred binary packages for which there is no corresponding source package in
the release.
Note that this function counts packages multiple times when the source is not
identified for multiple architectures.
Starting March 2017, there is a large number of sources missing, indicating an
error which was fixed in May 2017.

\begin{figure}[t]
    \centering
    \begin{tikzpicture}
        \begin{axis}
            [
                width=.9\columnwidth,
                height=35mm,
                ybar,
                xlabel=number of packages affected,
                symbolic x coords={0, 1-100, 101-1000, >1000},
                xtick style={draw=none},
                ylabel=number of releases,
            ]
            \addplot[fill=white] coordinates {(0, 1717)  (1-100, 277)  (101-1000, 303) (>1000, 703)};
            \addplot[color=gray, pattern=north east lines, pattern color=gray] coordinates {(0, 1717)  (1-100, 277)  (101-1000, 305) (>1000, 701)};
            \legend{binary packages, source packages}
        \end{axis}
        
    \end{tikzpicture}
    \vspace{-1mm}
    \caption{Version increment missing.}
    \label{fig:version}
    \vspace{-4mm}
\end{figure}
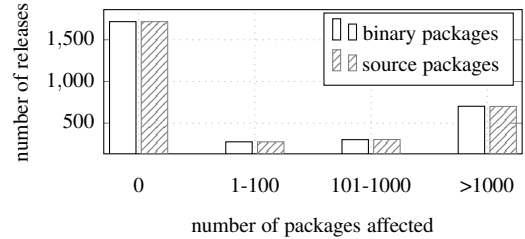

Comparison of version numbers was implemented using the Python
\texttt{apt\_pkg} library, which implements the non-trivial versioning rules
used in APT.
For the comparison of package meta data to detect if a package was changed we
filter non-critical fields such as tags.
Note that the meta data includes a hash over the package contents.
Whenever the meta data changes, we expect a version
increase.
All CPU architectures in the release are counted.
We show a many version increments are missing in a release in
Figure~\ref{fig:version}.
There are 1717 releases without unexpected events.
Some releases have a few inconsistencies in version increments, and a number of
releases have thousands.
The results for source and binary packages look similar in the plot, but
differ slightly.
There are a substantial number of changes in the package meta data without
version increment, suggesting that meta data changes are regularly done
without incrementing the version.

One source package that existed in the release meta data was missing in all
releases, because it had been removed administratively from the
\url{snapshot.debian.org} service.
There were no files where the hash was indicated incorrectly by the meta
data.

We discovered a considerable amount of anomalies in two years worth of
updates.
Our system would have discovered them automatically shortly after their
release and raised an alarm.
The result suggests that the release process must become more stringent in
order to allow such irregularities to be the cause of security alerts.

\section{Comparison to related work}
\label{sse:comparison}

CHAINIAC~\cite{nikitin_chainiac:_2017} is a software update transparency
system.
Clients can verify they were presented with a recent software version using
timestamping by the witnesses.
The witnesses also serve the function of assuring global consistency by
providing a collective signature.
The release history is fixed as immutable by inclusion of historic hashes.
In our proposed architecture, timeliness is assured by a wall clock validity
period embedded in the release file.
Global consistency and immutability is provided by submitting releases to the
log server.
The correct operation of the log is ensured by auditors and monitors.

The \emph{build verifiers} in CHAINIAC can be compared to \emph{monitors}
running our reproducible builds verification algorithm.
In contrast to the build verifier, the monitor operates asynchronously.
This means that the build verifier will detect and effectively stop
distribution of a non-reproducible package.
It also results in the system having to wait for enough build verifiers to
complete the build process for all supported instruction set architectures.
The monitor does not delay an update because a package builds slowly,
leading to a faster release compared to CHAINIAC.

In CHAINIAC, all code submissions must be confirmed by other developers
through cryptographic signatures.
The proposed architecture does not include a code review functionality, but
rather supposes a mapping of expected developers to projects, verified by
monitors.
In particular, the maintainers of the distribution's upload policy have an
incentive to monitor this activity and alert the wider project in case of
irregularities.

The collective authority consisting of independently operated witnesses
provides collective signatures, where some of the witnesses may be compromised.
The proposed architecture, on the other hand, relies on a single log or quorum
of logs.
This is justified by two reasons.
First, the logs require substantial disk space to store all versions of all
source packages ever submitted.
A log therefore need not only resources to operate the tree, but also
secondary storage, over time in the order of terabytes, which the
generic witness servers might be unwilling to provide.
Second, it is unclear who would operate the independent witness servers for
Linux distributions.
To achieve protection against equivocation, we advocate for the simple
interface provided by tree root cross logging between different logs.

The CHAINIAC model does provide accountability, but does not offer the
same forensic assurances.
In our proposed architecture, investigators are guaranteed an audit trail of
who made changes and the changes in \emph{source form}.
This guarantee is provided by the log, the operations of which
are verified by monitors.

In conclusion, our proposed system follows an ``untrusted but trustworthy''
model, doing validity checks only after a release.
This allows the system to reflect operational reality by exploiting the
incentives of different roles in the project.
Protection against equivocation is achieved by allowing independent
organizations to interoperate easily to their mutual benefit.

Parts of the system were proposed previously by the authors~\cite{blinded}.
In comparison, it is now extended with auditability, protection against
equivocation, and evaluated on actual distribution updates.

\section{Conclusion}
\label{sse:conclusion}
%
%

We propose a software transparency system for the popular Linux package
manager APT.
The system detects targeted backdoors and offers several forensic guarantees,
such as the availability of auditable source code for every installed binary.

To detect equivocation in log systems, we propose tree root cross logging.
This allows logs of different operators and types to interoperate, requiring
only a small interface.

An evaluation of the system on two years worth of actual distribution updates
identifies numerous anomalies, for instance missing source code.
Our system would have identified and allowed fixing these issues.


By tracking code and maintainer attribution, software transparency can offer a
building block for future systems to securely track the provenance of code
through various redistribution steps.

\section*{Acknowledgements}
\emph{removed for review}



\printbibliography

\end{document}